\newif\ifarxiv
\arxivtrue

\documentclass{jgaa-art}

\usepackage{graphicx}
\usepackage{amsmath}
\usepackage{amssymb}
\usepackage{etoolbox}
\usepackage{tcolorbox}
\usepackage{zref-clever}\zcsetup{cap=true}
\usepackage{todonotes}
\usepackage{subcaption}

\newcommand{\defproblem}[3]{
  \begin{tcolorbox}[colback=white]
    \hspace{0ex}\hspace*{-2.8ex}
    \begin{minipage}{0.99\textwidth}
      \vspace{0ex}\vspace*{-1ex}
      \begin{tabular}{@{}l@{~~}p{0.9\textwidth}@{}}
        {\sf\bfseries\color{gray} Problem:} & #1\\[.1ex]
        {\sf\bfseries\color{gray} Input:} & #2\\[.1ex]
        {\sf\bfseries\color{gray} Question:} & #3\\[-1ex]
      \end{tabular}
    \end{minipage}
  \end{tcolorbox}
}

\newcommand{\PEG}{Partizan Edge Geography}
\newcommand{\DPEG}{Distinct Partizan Edge Geography}
\newcommand{\RPEGsc}{\textsc{Rooted \PEG{}}}
\newcommand{\RDPEGsc}{\textsc{Rooted \DPEG{}}}
\newcommand{\ThreeTQBF}{\textsc{3-TQBF}}

\newtheorem{lemma}[theorem]{Lemma}
\newtheorem{corollary}[theorem]{Corollary}

\AddToHook{env/lemma/begin}{%
  \zcsetup{countertype={theorem=lemma}}%
}
\AddToHook{env/corollary/begin}{%
  \zcsetup{countertype={theorem=corollary}}%
}
\AddToHook{env/observation/begin}{%
  \zcsetup{countertype={theorem=observation}}%
}

\begin{document}

\ifarxiv
    \renewcommand{\printlicense}{}
    \patchcmd{\maketitle}
    {\smallskip\hrule\smallskip\printlicense}
    {\printlicense}
    {}
    {\PackageError{main}{Could not remove the JGAA license separator}{}
    }
    \fancyhead{}
    \fancyhead[LE,RO]{\thepage}
    \pagestyle{fancy}
\else
    \doi{} %
    \Issue{0}{0}{0}{0}{0} %

    \submitted{}%
    \accepted{}%
    \published{}%
    \type{}%
    \editor{}%
\fi

\HeadingAuthor{} %
\HeadingTitle{} %
\title{The Complexity of Undirected Partizan Edge Geography} %
\Ack{Supported by JSPS KAKENHI Grant Number JP26KJ1299 and JST SPRING, Grant Number JPMJSP2125.} %

\authorOrcid[nagoya,jsps]{Yuto Okada}{research@yutookada.com}{0000-0002-1156-0383} %
\affiliation[nagoya]{Nagoya University, Japan} %
\affiliation[jsps]{JSPS Research Fellow} %

\maketitle

\begin{abstract}
    \PEG{} is a two-player game on a graph where each player has their own token on a vertex and moves their token to a neighbor in a turn removing the edge.
    Two player alternately move their tokens and the first player who cannot move loses the game.
    Fraenkel and Simonson (TCS, 1993) showed that the winner determination of this game is PSPACE-complete on directed graphs, given a graph and token positions.

    This paper resolves its complexity on undirected graphs by showing the PSPACE-completeness on bipartite undirected graphs of maximum degree 3.
    The same reduction also works for a variant where two tokens cannot be placed on the same vertex.
\end{abstract}

\section{Introduction}

(Generalized Vertex) Geography is one of the most well-studied two-player combinatorial games.
This game is played on a directed graph together with a shared token placed on some vertex.
In each turn, a player chooses an outneighbor of the vertex with the token, moves the token to it, and removes the previous vertex.
Two players alternately do this and the first player who cannot move loses the game.
The problem of, given the current graph and the token position, determining the winner is well known to be PSPACE-complete, even on planar bipartite (directed) graphs of maximum degree 3~\cite{DBLP:journals/jacm/LichtensteinS80}.
Since its use for Go~\cite{DBLP:journals/jacm/LichtensteinS80}, this result has served as one of the standard base problems for proving intractability of two-player combinatorial games~\cite{DBLP:journals/tcs/BonnetJS16,DBLP:books/daglib/0023750}.

Due to its importance, many variants of Geography are studied in the literature.
For most of them, the winner determination problems are PSPACE-complete in general with only a few exceptions; see~\cite{DBLP:journals/dam/FoxG22} for details.

This paper studies the winner determination problem of one of those variants called \PEG{}.
In this game, each player has their own token placed on a vertex (Partizan) and moves their token along an edge, removing the used edge (Edge).
Formally, in each turn, a player (1) chooses a neighbor (or an outneighbor if the graph is directed) of the vertex with their token, (2) moves the token to the neighbor, and (3) removes the used edge.
Again, two players alternately do this and the first player who cannot move loses the game.
Let us use colors \emph{blue} and \emph{red} to distinguish the two players and their tokens.
The problem is then formally defined as follows.
Note that the prefix \textsc{Rooted} is added to clarify that the starting token positions are specified; there are \textsc{Unrooted} variants, where two players can choose their starting positions first~\cite{DBLP:journals/dam/BuchananCCHCSW26}.
\defproblem
{\RPEGsc{}}
{A graph $G$ and the positions $b, r \in V(G)$ of the blue and red tokens, respectively.}
{When the blue player moves next and both players move optimally, will the blue player win the game?}

This problem was already considered in 1993 by Fraenkel and Simonson~\cite{DBLP:journals/tcs/FraenkelS93}\footnote{In their paper, the problem is called Partizan Arc Geography (PAG).} mainly on directed graphs.
They showed that the problem is PSPACE-complete on directed bipartite graphs with maximum indegree 2, maximum outdegree 2, and maximum degree 3.
Meanwhile, they gave polynomial-time algorithms for the problem on directed and undirected trees.
They also showed that the winner determination (technically, testing if the second player wins) is NP-hard on planar directed graphs and directed acyclic graphs.

\paragraph{Our results.}

The main contribution of this paper is to reveal the complexity of \RPEGsc{} on \emph{undirected} graphs, which can be seen as the counterpart on undirected graphs of the result on directed graphs~\cite{DBLP:journals/tcs/FraenkelS93}.

\begin{theorem}\label{thm:main}
    \RPEGsc{} is PSPACE-complete on bipartite undirected graphs of maximum degree 3.
\end{theorem}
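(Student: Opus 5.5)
Membership in PSPACE follows from the standard game-tree argument. Every move deletes an edge, so any play lasts at most $|E(G)|$ turns. A depth-first evaluation of the game tree therefore needs only polynomial space: we keep the current graph, both token positions and the move stack.

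For hardness, I would reduce from \ThreeTQBF{}. This is the standard template in the style of Lichtenstein and Sipser, and it is also what Fraenkel and Simonson used for the directed case. The natural alternative is to reduce from the directed version of \RPEGsc{} by simulating arcs with undirected gadgets. However, directedness is exactly what is lost on undirected graphs, so a direct reduction looks cleaner.

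The plan is to make the blue token walk along a chain of variable gadgets for $x_1,\dots,x_n$. At each gadget the existential player (blue for $\exists$ variables) chooses one of two parallel paths, the "true" or "false" side, and deletes its edges. For $\forall$ variables, the red player must make the choice. Because the tokens are partizan, I would let each player's token traverse its own chain of choice gadgets, interleaved so that they choose alternately. This means red also walks a chain, and the tokens are synchronized by a parity/timing argument: every segment is a path of fixed length, so both players are forced to spend the same number of moves.

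After the assignment phase, red picks a clause by walking into a clause-selection tree built from degree-3 branchings. Blue must then answer by reaching a literal of that clause whose path has \emph{not} been used. At that point, a used literal path has missing edges, so whoever enters it runs out of moves early. The winner is then decided by who can make the longer remaining walk, and I would tune this with path lengths so that blue wins iff the clause is satisfied. Bipartiteness is obtained by subdividing edges so that all cycles are even. This is possible because only path lengths modulo the synchronization matter. Maximum degree 3 holds because the only branchings are binary choice points and literal-path junctions, where each variable path is duplicated into one copy per occurrence via degree-3 splitting.

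The main obstacle is that edges are undirected, so players can walk backwards. A player may also deviate: leave a variable gadget midway, return along an unused parallel path, or use the other player's region to gain extra tempo. I would handle this with "punishment" structure. Every deviation point should be designed so that the deviating player enters a short dead-end path (strictly shorter than the opponent's guaranteed remaining walk) and immediately loses. Meanwhile, the honest walk must always leave the non-deviating player a long private reserve path, a "tail" of length exceeding any possible deviation gain. Verifying this requires a careful case analysis per gadget, showing that in each gadget any non-intended move loses. That lemma-by-lemma case analysis, especially around the clause-checking phase where both tokens may be near shared literal vertices, is where I expect most of the work to lie.
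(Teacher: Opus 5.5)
Your membership argument is fine, and your assignment phase broadly matches the paper's. In the paper both tokens walk through every variable gadget, with blue choosing the side at odd $i$ and red at even $i$.

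\textbf{The gap is the clause check.} You transplant the Lichtenstein--Sipser step ``red picks a clause, blue answers with a literal'', but that step relies on a single shared token. Here the two tokens interact \emph{only} through deleted edges, and a deletion can only remove options. If red's token walks down a clause-selection tree, it deletes its own root-to-leaf path and nothing else. Blue's token is not brought to the chosen clause, and the untouched branches stay open to blue, so if anything blue could head for a \emph{different} clause. You give no mechanism by which red's choice confines blue to that clause. The length tuning meant to decide the outcome is also left unspecified. There is a further subtlety you do not address. In edge geography a token cannot enter a dead-end literal path and come back out. So a literal check has to be a cycle through an edge of the variable gadget that the assignment left intact.

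\textbf{How the paper avoids clause selection.} Take $\ell = 2nm$. Blue's token is meant to traverse \emph{all} clause gadgets in order, between two paths of length $\ell^3$. Each satisfied clause $C_j$ offers a detour of about $2\ell^3$:
\begin{itemize}
\item out from $a_j$ via $a_{j,k}$ and a path of length about $\ell^2$ to $t_{i,2j}$ (or $f_{i,2j}$);
\item across the surviving edge to $t_{i,2j-1}$;
\item back via a second path to $b_{j,k}$ and on to $b_j$.
\end{itemize}
If some clause is falsified, its literal paths end in leaves. Then, starting from $d_1$, blue can use at most one of the three $\ell^3$-paths in the subtrees rooted at $a_j$, $b_j$ and $B_3$. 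So blue has at least $(4m+4)\ell^3$ moves if $\phi$ is satisfied and at most $(4m+2+o(1))\ell^3$ otherwise. Red's fixed path from $R_2$, of about $(4m+3)\ell^3$ moves, settles the race. The conjunction over clauses is realized by a length threshold, not by a choice of red.

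\textbf{Deviations.} Your deviation analysis, which you flag as the bulk of the work, is not carried out, and the ``short dead-end'' punishment is not available. The places where a player can leave the intended route include the vertices $t_{i,\cdot}$ and $f_{i,\cdot}$ where the long literal paths start. Those paths are needed for the detours, so they cannot be short dead ends. The paper instead uses two ingredients:
\begin{itemize}
\item the scale separation $O(nm) \ll \ell^2 \ll \ell^3$, so that any interference costs the deviator at least $\ell^2$ moves;
\item the judge gadget, where $B_1$ and $R_1$ are adjacent. This lets the on-track player seize the long paths and trap the deviator, e.g.\ blue plays $B_1 \to R_1 \to R_2 \to B_2 \to d_1 \to \cdots$.
\end{itemize}
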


We also consider a variant of \PEG{} where two tokens cannot be placed on the same vertex.
Let us refer to this variant as \emph{\DPEG{}} and the corresponding problem as \RDPEGsc{}.
With the same reduction as for \zcref{thm:main}, we also obtain the following for this problem.

\begin{corollary}\label{cor:variant}
    \RDPEGsc{} is PSPACE-complete on bipartite undirected graphs of maximum degree 3.
\end{corollary}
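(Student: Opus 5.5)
The statement to prove is \zcref{cor:variant}. The plan is to reuse the reduction behind \zcref{thm:main} and argue that, on the instances it produces, the distinctness constraint never changes the outcome.

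Membership in PSPACE is immediate. Each move deletes an edge, so a play lasts at most $|E(G)|$ moves. The game tree can therefore be searched depth-first in polynomial space, and checking the extra rule (a token may not move onto the vertex occupied by the other token) is a trivial local test.

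For hardness, I would take the reduction from \ThreeTQBF{} used for \zcref{thm:main}. It produces a bipartite graph of maximum degree 3 with start positions $b$ and $r$, built from variable gadgets traversed by both players and clause gadgets that are checked at the end. The key claim is that in this construction both tokens, under any play relevant to the strategy analysis, stay in vertex-disjoint regions.

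\begin{itemize}
\item Each player's token should be confined to a region of its own, a blue track and a red track.
\item These tracks should meet only in controlled places, for example where one player's move deletes an edge of the other player's track. Edge deletion is how the gadgets interact, so the interaction need not involve both tokens sharing a vertex.
\end{itemize}

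If the tokens can never occupy the same vertex in any play, then the two games on that instance have identical move sets. Their winners then coincide, and \zcref{thm:main}'s correctness proof transfers verbatim.

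The main obstacle is the case where the construction does let the tokens reach a common vertex, say a shared clause-checking vertex or a deviation where one player leaves their track. I would handle it in two steps.

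\begin{itemize}
\item First, I would show that any deviation that brings a token into the other player's region is immediately losing for the deviator. This is the same kind of argument used in \zcref{thm:main} to rule out deviations. The blocking rule only removes options, and it does so only at moments when one token sits on a vertex adjacent to the other. So the intended strategies remain legal.
\item Second, I need the winner's strategy from \zcref{thm:main} never to rely on moving onto the opponent's vertex. Symmetrically, the loser must gain nothing from being blocked. The loser could be stuck earlier, which only helps the winner, but I must check that the winner is never blocked into a loss.
\end{itemize}

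Concretely, I would verify gadget by gadget that at every reachable position of the intended play the tokens are at distance at least 2. Bipartiteness and the alternation of moves help with the parity bookkeeping. Where the distance could be smaller, I would subdivide the connecting paths by an even number of vertices, which preserves bipartiteness and maximum degree 3, so that the distance stays at least 2.
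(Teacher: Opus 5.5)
\textbf{Gap: the proposal's structural claim is false for this reduction, and its proposed fix changes the reduction.}

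Your criterion is the right one. If the player who wins the original game has a winning strategy that never moves onto the opponent's vertex, that player also wins the distinct game, because the opponent's options only shrink. The gap is in how you propose to verify it. Your key claim is that the tokens stay on vertex-disjoint tracks, at distance at least $2$ throughout the intended play. That is false here.
\begin{itemize}
\item Both tokens reach the judge gadget simultaneously by design. After the variable phase blue sits on $B_1$ and red on the adjacent vertex $R_1$. After one more move each they sit on the adjacent vertices $B_2$ and $R_2$.
\item The punishing strategies of \zcref{lem:unexpected-behavior} deliberately run through the opponent's judge vertices: $B_1\to R_1\to R_2\to B_2$ for blue, $R_1\to B_1\to B_2$ for red.
\end{itemize}
So your gadget-by-gadget distance check fails exactly where the new rule has any effect.

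Subdividing is not a harmless repair. The offending edges $\{B_1,R_1\}$ and $\{B_2,R_2\}$ belong to the judge gadget. Crossing them strands a token, and both \zcref{lem:unexpected-behavior} and \zcref{lem:correctness} rely on that. Subdividing them creates new lines of play, for example blue entering the subdivided path from $B_1$. Both lemmas would then have to be re-proved, so the correctness proof does not transfer verbatim. You would also no longer be showing that the same reduction works.

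\textbf{What is missing} is the concrete check that makes subdivision unnecessary. Adjacency by itself is harmless; only a prescribed winner's move onto the opponent's vertex would matter, against every reply. Your first bullet's inference that the intended strategies remain legal does not follow without this check.
\begin{itemize}
\item In the strategies of \zcref{lem:unexpected-behavior}, the tokens never become equal or even adjacent, whatever the deviator does. The deviator cannot get near the on-track token's route in time. Hence these strategies stay legal and winning.
\item In the intended play, the only moves the rule ever forbids are blue's $B_1\to R_1$ (red on $R_1$) and $B_2\to R_2$ (red on $R_2$). The winner never needs either move, since each strands blue at once: red moves on along its only remaining edge, and all three edges at blue's new vertex are used.
\item The winner's prescribed moves, $B_1\to B_2\to d_1$ for blue or $R_1\to R_2\to\dots\to R_3$ for red, never enter the opponent's vertex. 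After them the tokens are in separate regions.
\end{itemize}
This is the entire content of the paper's argument for the corollary. Your proposal never identifies these adjacency points or checks them.
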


\paragraph{Related results.}

Fraenkel and Simonson~\cite{DBLP:journals/tcs/FraenkelS93} showed in the same paper that \textsc{Rooted Partizan (Vertex) Geography} is PSPACE-complete on \emph{directed} graphs with the same properties.
Its PSPACE-completeness on \emph{undirected} (bipartite) graphs was later shown by Fox and Geissler~\cite{DBLP:journals/dam/FoxG22}.

\section{PSPACE-Completeness}

In this section, we prove \zcref{thm:main} and confirm later that the reduction also gives \zcref{cor:variant}.
The PSPACE-membership of \RPEGsc{} is clear; observe that each state can be encoded as a tuple of a subgraph of $G$ and two vertices, that each turn an edge is removed, and that there cannot be a loop of states.

To show the PSPACE-hardness we use the following problem \ThreeTQBF{}, where each $Q_i$ is the quantifier $\exists$ when $i$ is odd and $\forall$ otherwise.
\defproblem
{\ThreeTQBF}
{Integers $n, m$ and a 3-CNF formula $\phi$ on $n$ variables $x_1, x_2, \dots, x_n$ with $m$ clauses.}
{Is $Q_1 x_1 Q_2 x_2 \dots Q_n x_n \phi(x_1, x_2, \dots, x_n)$ true?}
This problem is PSPACE-complete~\cite{DBLP:conf/stoc/StockmeyerM73} and remains so even if $n$ is odd, as we can just add a dummy variable and a dummy clause if $n$ is even (also observed in~\cite{DBLP:journals/tcs/FraenkelS93}).
Hence, in the following we assume $n$ to be odd.
We also assume that $n$ and $m$ are sufficiently large.

\paragraph{Construction.}

Let $\langle n, m, \phi \rangle$ be an instance of \ThreeTQBF{} and $\ell = 2nm$.
Before stating the whole construction, let us introduce some gadgets needed.

\begin{figure}[htbp]
    \begin{subfigure}[b]{.47\textwidth}
        \centering
        \includegraphics[page=1]{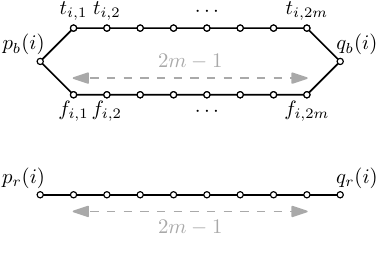}
        \subcaption{when $i$ is odd.}\label{fig:variable-gadget:odd}
    \end{subfigure}
    \hfill
    \begin{subfigure}[b]{.47\textwidth}
        \centering
        \includegraphics[page=2]{figure}
        \subcaption{when $i$ is even.}\label{fig:variable-gadget:even}
    \end{subfigure}
    \caption{The variable gadget for variable $x_i$.}
    \label{fig:variable-gadget}
\end{figure}
\begin{figure}[htbp]
    \centering
    \includegraphics[page=3]{figure}
    \caption{The clause gadget for clause $C_j$.}
    \label{fig:clause-gadget}
\end{figure}
\begin{figure}[htbp]
    \centering
    \includegraphics[page=4]{figure}
    \caption{The judge gadget.}
    \label{fig:judge-gadget}
\end{figure}

For each variable $x_i$, we define the \emph{variable gadget} for $x_i$ as the graph depicted in \zcref{fig:variable-gadget:odd} if $i$ is odd and the graph depicted in \zcref{fig:variable-gadget:even} otherwise.
Although the two graphs are isomorphic, we name vertices in a different manner depending on the parity of $i$.
For each clause $C_j$ in $\phi$, we define the \emph{clause gadget} for $C_j$ as the graph depicted in \zcref{fig:clause-gadget}.
Lastly, we call the graph depicted in~\zcref{fig:judge-gadget} the \emph{judge gadget}.
Note that each length in the figures denotes the number of edges.

Using the above gadgets, we construct an instance $\langle G, b, r \rangle$ of \RPEGsc{} in the following manner; see also \zcref{fig:whole-construction}.
\begin{enumerate}
    \item Let $G$ be the graph consisting only of the judge gadget.
    \item For each $1 \leq i \leq n$, add the variable gadget for $x_i$ to $G$.
    \item For each $1 \leq i < n$, add two edges $\{q_b(i), p_b(i+1)\}$ and $\{q_r(i), p_r(i+1)\}$ to $G$.
    \item For each $1 \leq j \leq m$, add the clause gadget for $C_j$ to $G$.
    \item For each $1 \leq j < m$, add edge $\{b_j, a_{j+1}\}$ to $G$.
    \item Add two edges $\{q_b(n), B_1\}$ and $\{q_r(n), R_1\}$ to $G$.
    \item Add a path consisting of $\ell^3$ edges each, between $d_1$ and $a_1$ and between $b_m$ and $d_2$ to $G$.
    \item For each $1 \leq j \leq m$ and each $1 \leq k \leq 3$, let $i$ be the index of the variable used for the $k$-th literal of clause $C_j$. Unless the $k$-th literal already appears as the $k'$-th literal for some $k' < k$, we connect the clause gadget for $C_j$ with the variable gadget for $x_i$. Observe that $G$ is connected and bipartite after the last step. Note that $\ell^3 = 8n^3m^3$ is even. We adjust the path lengths according to the unique bipartition so that adding those paths does not break the bipartiteness.
          \begin{itemize}
              \item If it is a positive literal, add two paths, consisting of $\ell^2$ or $\ell^2 + 1$ edges each, between $a_{j,k}$ and $t_{i,2j}$ and between $b_{j,k}$ and $t_{i,2j-1}$ to $G$.
              \item If it is a negative literal, add two paths, consisting of $\ell^2$ or $\ell^2 + 1$ edges each, between $a_{j,k}$ and $f_{i,2j}$ and between $b_{j,k}$ and $f_{i,2j-1}$ to $G$.
          \end{itemize}
\end{enumerate}

\begin{figure}[htbp]
    \centering
    \includegraphics[page=5]{figure}
    \caption{The constructed graph $G$.}
    \label{fig:whole-construction}
\end{figure}

Observe that $G$ obtained in this manner is bipartite and has degree 3.
Moreover, it can be constructed in polynomial time as $\ell = O(nm)$.
We let $b = p_b(1), r = p_r(1)$ and complete the construction.

Intuitively, we expect both players to go through the variable gadgets first, assigning true or false to the variables alternately.
We then expect the red player to proceed as $R_1 \rightarrow R_2 \rightarrow \dots \rightarrow R_3$.
The blue player is expected to proceed as $B_1 \rightarrow B_2 \rightarrow d_1 \rightarrow a_1 \rightarrow b_1 \rightarrow \dots \rightarrow a_m \rightarrow b_m \rightarrow d_2 \rightarrow B_3$.
For each satisfied clause, the blue player can take a detour of length approximately $2\ell^3$ through its satisfying variable.
Hence, if all the clauses are satisfied, the blue player can move approximately $(4m + 4)\ell^3$ times, and otherwise $(4m + 2)\ell^3$ times in the best case, which can be distinguished by the number of possible moves by the red player, approximately $(4m + 3)\ell^3$.
However, a player might still be able to win the game by deviating from the expected behavior.
We first confirm that this is in fact impossible.

\begin{lemma}\label{lem:unexpected-behavior}
    The first player who diverges from the following token movements loses the game.
    \begin{itemize}
        \item[(blue)] $p_b(1) \rightarrow (t_{1,1} \rightarrow \dots \rightarrow t_{1,2m} \text{ / } f_{1,1} \rightarrow \dots \rightarrow f_{1,2m}) \rightarrow q_b(1) \rightarrow p_b(2) \rightarrow \dots \rightarrow q_b(2) \rightarrow p_b(3) \rightarrow (t_{3,1} \rightarrow \dots \rightarrow t_{3,2m} \text{ / } f_{3,1} \rightarrow \dots \rightarrow f_{3,2m}) \rightarrow q_b(3) \rightarrow \dots \rightarrow q_b(n) \rightarrow B_1$.
        \item[(red)] $p_r(1) \rightarrow \dots \rightarrow q_r(1) \rightarrow p_r(2) \rightarrow (t_{2,1} \rightarrow \dots \rightarrow t_{2,2m} \text{ / } f_{2,1} \rightarrow \dots \rightarrow f_{2,2m}) \rightarrow q_r(2) \rightarrow p_r(3) \rightarrow \dots \rightarrow q_r(3) \rightarrow \dots \rightarrow q_r(n) \rightarrow R_1$.
    \end{itemize}
\end{lemma}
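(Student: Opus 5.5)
The plan is a counting argument on the number of moves each player has left. If a player deviates, they strand themselves in a region whose remaining length is too short to win, while the opponent keeps access to a long path. The key quantity is the length scale: the final race takes about $(4m+3)\ell^3$ moves, clause detours have length on the order of $\ell^2$, and a variable gadget has length $O(m)$. So any deviation that forfeits access to $B_1$ or $R_1$, together with what follows, loses $\Omega(\ell^3)$ moves. No local gain inside a variable gadget or along a detour can compensate, since all such gains total $O(nm\ell^2) = o(\ell^3)$.

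First I would bound, for each player, the longest trail available when they never reach their terminal region. Without $B_1$, blue is confined to the variable gadgets and the attached detour paths. Using degree $\le 3$, each detour path of length about $\ell^2$ can be traversed at most once per direction before dead-ending, so blue's trail has length at most $O(nm\ell^2)$. The same bound holds for red without $R_1$. Conversely, the non-deviating opponent then has at least about $\ell^3$ moves remaining through the long paths. For red this is the $R_1\to R_2\to\dots\to R_3$ chain, whose length is about $(4m+3)\ell^3$. Hence the deviator runs out first.

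Second, I would list the possible deviations in the order they can occur.
\begin{enumerate}
\item Entering a variable gadget that belongs to the opponent's turn order (blue in an even-indexed gadget, red in an odd one). By the gadget structure, the player either gets stuck there or arrives late or out of sync. The parity and timing of the $p$/$q$ vertices should show that they cannot later reach $q_b(n)$ or $q_r(n)$ on time.
\item Leaving a $t$- or $f$-path midway into a clause path of length $\ell^2$. Such a path leads into a clause gadget and then to the $\ell^3$ paths near $d_1$ or $d_2$. I need that blue entering the clause region this early cannot reach $B_3$, or reaches it with fewer moves than red has. Red entering there cannot outlast blue, because blue still has $B_1\to B_2\to d_1$ access.
\item Mixing $t$ and $f$ edges, which the gadget should make impossible without getting stuck.
\end{enumerate}
For each case I would exhibit an explicit reply strategy for the non-deviator: simply continue the expected path and then take the longest remaining route. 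I would then compare the two move counts.

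The main obstacle is the second kind of deviation, especially blue entering the clause/judge region through a detour early, or red invading blue's judge path to cut it. In both situations the two tokens share resources, and edge removals by one player can shorten the other's path. I would handle this by showing that the entry points are far apart. The detour paths have length $\ell^2$ and the judge paths $\ell^3$, so a deviating token reaches the opponent's region only after the opponent has already committed past the contested edges. At worst it damages $O(\ell^2)$ edges, which still leaves a margin of order $\ell^3$. The timing bookkeeping, that is, checking move parity with blue moving first, would be done exactly, using the chosen lengths $\ell^2$ versus $\ell^2+1$.
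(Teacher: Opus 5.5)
There is a genuine gap. Your plan never gives the non-deviator a reply that provably wins, and the estimates it relies on are wrong.

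\emph{Where the estimates fail.}
\begin{itemize}
\item Your first-step bound does not hold. A deviator that leaves its track through a variable--clause path is not confined to the variable gadgets. After about $\ell^2$ moves it is inside a clause gadget, and the clause gadgets alone contain $\Theta(m\ell^3)$ edges. That is the same order as the non-deviator's budget, so no order-of-magnitude comparison settles the race.
\item Your fallback, ``at worst it damages $O(\ell^2)$ edges, which still leaves a margin of order $\ell^3$'', also fails. The final race is $(4m+4)\ell^3$ versus $(4m+3)\ell^3$ versus $(4m+2)\ell^3$, so the margin is a single $\ell^3$. Using up one variable--clause path can kill the only satisfying detour of a clause, which is worth $2\ell^3$ to blue.
\end{itemize}

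\emph{Why your replies do not work.} The replies you name ($B_1\to B_2\to d_1$ for blue, the $R$-chain for red) do not exploit the deviation. For example, let red walk $t_{i,1}\to\dots\to t_{i,2m}$ in an even gadget where $x_i$ occurs positively in some $C_j$ with $j\ge 2$. At $t_{i,2j-1}$ red can run that clause's detour backwards,
$t_{i,2j-1}\to\dots\to b_{j,k}\to\dots\to b_j\to a_j\to\dots\to a_{j,k}\to\dots\to t_{i,2j}$,
and then resume its track.
\begin{itemize}
\item Red completes this loop before a blue that walks the expected route (taking the detour at $C_1$) can reach $a_j$.
\item Red gains about $2\ell^3$ moves, and blue gets stuck when it arrives at $a_j$.
\item Your blue never touches $R_1$, so red still collects the $(4m+3)\ell^3$ moves of the $R$-chain and wins.
\end{itemize}
Symmetrically, you never bound what a blue deviator can collect in the clause region while many detours are still open. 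So ``red just runs its chain'' is not justified either.

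The missing idea is a crossover-and-seal reply, which makes your case analysis unnecessary. The non-deviator finishes its track and then walks through the opponent's entrance:
\begin{itemize}
\item Blue plays $q_b(n)\to B_1\to R_1\to R_2\to B_2\to d_1$, which cuts red off from the $R$-chain.
\item Red plays $q_r(n)\to R_1\to B_1\to B_2\to d_1$, which takes blue's entrance.
\end{itemize}
It then runs the main route $d_1\to\dots\to a_1\to b_1\to a_2\to\dots\to b_m\to\dots\to d_2\to B_3$ with \emph{no} detours. This route does not depend on which clauses are satisfied. It seals the deviator inside the variable gadgets, the clause gadgets and the paths between them. That region has at most $2m\ell^3+O(m\ell^2)$ usable edges, while the non-deviator has $(2m+2)\ell^3$ moves.

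Skipping the detours is what makes the timing work:
\begin{itemize}
\item The deviator needs at least $\ell^2$ moves to enter a clause gadget and at least $\ell^3$ more to reach any $a_j$ or $b_j$.
\item The non-deviator reaches $d_1$ within $O(nm)$ moves and $b_m$ within $\ell^3+O(nm)$ moves.
\end{itemize}
So the non-deviator passes every contested vertex first. In the example above, blue passes $a_j$ and $b_j$ about $\ell^2$ moves before red can get there.
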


\begin{proof}
    Let us consider the state just after the turn on which such a divergence occurred.
    Then, the other player, say $c$, is still on track.
    We claim that player $c$ can win the game no matter how the opponent moves.
    If $c$ is the blue player, it suffices to move their token to $B_1$ along the above path and then proceed as follows:
    \begin{align*}
        B_1 \rightarrow R_1 \rightarrow R_2 \rightarrow B_2 \rightarrow d_1 \rightarrow \dots \rightarrow a_1 \rightarrow b_1 \rightarrow a_2 \rightarrow b_2 \rightarrow \dots \rightarrow a_m \rightarrow b_m \rightarrow \dots \rightarrow d_2 \rightarrow B_3.
    \end{align*}
    It is always possible for the blue player to do this movement.
    To interfere with this movement, the red player would first need to use a path between a variable gadget and a clause gadget of length at least $\ell^2$.
    This is more than enough for the blue token to reach $d_1$, which requires $O(mn)$ turns.
    This is less than $\ell^2 = 4n^2m^2$ for sufficiently large $n, m$.
    To interfere with the rest of the movement, namely from $d_1$ to $d_2$, the red player would need to use at least $\ell^3$ turns additionally to reach $a_j$ or $b_j$ for some $j$.
    Hence, the red player needs at least $\ell^2 + \ell^3$ turns after the divergence turn to do so, while the blue player already reaches $b_m$ in $\ell^3 + O(mn)$ turns, which is again less than $\ell^3 + \ell^2 = \ell^3 + 4n^2m^2$ for sufficiently large $n,m$.

    Observe now that, with the above movement of the blue player, the number of possible edges that the red player could use in the game can already be bounded from above by $2m\ell^3 + O(m\ell^2)$:
    due to the blue player moving their token as $q_b(n) \rightarrow B_1 \rightarrow R_1 \rightarrow R_2$, $d_1 \rightarrow \dots \rightarrow a_1$, and $b_m \rightarrow \dots \rightarrow d_2$, the red token is essentially trapped in the variable gadgets, the clause gadgets, and the paths between them.
    Meanwhile, the blue player can still make $(2m+2)\ell^3 = 2m\ell^3 + \Theta(nm\ell^2)$ moves and win the game for sufficiently large $n, m$.

    If $c$ is the red player, it suffices to move their token to $R_1$ along the above path and then proceed as follows:
    \begin{align*}
        R_1 \rightarrow B_1 \rightarrow B_2 \rightarrow d_1 \rightarrow \dots \rightarrow a_1 \rightarrow b_1 \rightarrow a_2 \rightarrow b_2 \rightarrow \dots \rightarrow a_m \rightarrow b_m \rightarrow \dots \rightarrow d_2 \rightarrow B_3.
    \end{align*}
    Almost the same arguments can be applied to show that the red player wins the game by doing this movement.
\end{proof}

By \zcref{lem:unexpected-behavior}, we can assume that both players move their tokens as stated in \zcref{lem:unexpected-behavior}.
With this assumption we show the correctness of the reduction in the following.

\begin{lemma}\label{lem:correctness}
    The blue player wins if and only if $Q_1 x_1 Q_2 x_2 \dots Q_n x_n \phi(x_1, x_2, \dots, x_n)$ is true.
\end{lemma}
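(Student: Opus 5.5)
By \zcref{lem:unexpected-behavior}, both players may be assumed to follow the prescribed routes, so play splits into an assignment phase and a race phase. In the assignment phase, blue traverses the gadgets for odd $i$ and red those for even $i$. Each traversal goes through either the $t$-path or the $f$-path, and this choice defines an assignment: $x_i$ is \emph{true} iff the $f$-path of gadget $i$ is consumed. The effect is that the $t$-path of a true variable, together with its connectors $t_{i,2j-1},t_{i,2j}$, stays intact; for a false variable the $f$-path stays intact. Since $n$ is odd and $Q_i=\exists$ exactly for odd $i$, blue chooses existential variables and red chooses universal ones, in the quantifier order. Hence the first phase is exactly the evaluation game of the quantified formula. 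It remains to show that after reaching $B_1$ and $R_1$, blue wins iff the assignment satisfies $\phi$. The winner of the whole game is then determined by the standard game semantics of QBF.

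For the race phase I would count moves. Both tokens reach $B_1$ and $R_1$ after equal, fixed numbers of turns, because every route through a gadget has the same length and the connecting edges are symmetric. From there, red can only walk $R_1\to R_2\to\dots\to R_3$, whose length in the judge gadget is about $(4m+3)\ell^3$ (read off \zcref{fig:judge-gadget}). Blue walks $B_1\to B_2\to d_1\to\dots\to a_1$ and through the clause chain to $d_2\to B_3$, which is about $(4m+2)\ell^3$ moves without detours.

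At clause $C_j$, blue may leave via $a_{j,k}$ along a path of length about $\ell^2$ to $t_{i,2j}$ (or $f_{i,2j}$), walk the surviving one-edge segment to $t_{i,2j-1}$, and return via $b_{j,k}$ to $b_{j}$-side. This is possible exactly when the segment is unused, i.e.\ when the $k$-th literal is true under the assignment. I would check in \zcref{fig:clause-gadget} that such a detour replaces a short direct route and gains about $2\ell^3$ moves only when taken. The likely mechanism is that the direct route skips an $\ell^3$-length part, or that each clause gadget internally holds paths of length $\ell^3$ that are reachable only through a literal detour. The counting is then: satisfied $\phi$ gives blue at least $(4m+4)\ell^3-O(m\ell^2)$ moves, more than red's $(4m+3)\ell^3+O(1)$, so red runs out first. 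Unsatisfied $\phi$ means some clause admits no detour, and blue gets at most $(4m+2)\ell^3+O(m\ell^2)$ moves, so blue runs out first. Since $\ell^3$ dominates all the $O(m\ell^2)$ error terms for large $n,m$, the parity and turn-order offsets do not matter.

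The main obstacle is ruling out unintended uses of the connector paths during the race. Blue might take several detours per clause, or reuse a segment $t_{i,2j-1}t_{i,2j}$ meant for another clause (the paths are indexed by $2j$, so a segment is tied to its clause, but entering one clause gadget and exiting via another variable's connector must still be excluded). Red might leave $R_1$'s chain to block blue. To handle this, I would argue that each detour gains at most one $\ell^3$ block per clause and cross-clause wandering yields only $O(\ell^2)$ extra moves. For red, every connector from the variable gadgets is already cut off by the judge gadget structure after the first phase, as in the proof of \zcref{lem:unexpected-behavior}, so red's total is bounded by its judge path plus lower-order terms.
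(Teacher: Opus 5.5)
Your outline matches the paper on the first phase: the variable gadgets realize the QBF evaluation game, with consuming the $f$-path meaning $x_i$ is true. It also matches on red's budget of $(4m+3)\ell^3+O(1)$ moves from $R_2$, and on the true direction, where blue takes one literal detour per clause and reaches at least $(4m+4)\ell^3$ moves. The gap is in the false direction. From ``some clause admits no detour'' you conclude that blue has at most $(4m+2)\ell^3+O(m\ell^2)$ moves. That conclusion needs an upper bound over \emph{all} walks of blue, but your argument only compares the intended route with and without detours. Nothing you say stops blue from doing either of the following:
\begin{itemize}
\item walking into the $\ell^3$-paths of the unsatisfied clause gadget anyway;
\item leaving one clause gadget along an $\ell^2$-path, running along an intact $t$- or $f$-path, and entering another clause gadget.
\end{itemize}
Your proposed patch (``each detour gains at most one $\ell^3$ block, cross-clause wandering yields only $O(\ell^2)$'') is asserted, not proved, and it does not explain why any $\ell^3$ material is actually lost. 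Your numbers also do not close on their own terms. With a no-detour walk of about $(4m+2)\ell^3$ and a gain of $2\ell^3$ per satisfied clause, satisfying all but one clause would already give blue about $6m\ell^3$ moves, far above red's $(4m+3)\ell^3$.

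The idea you need is a dead-end argument combined with a global edge count. If $C_j$ is unsatisfied, the variable-side endpoints of all connector paths of its gadget have become leaves. So the component containing blue's token has three subtrees, rooted at $a_j$, at $b_j$, and at $B_3$, each containing a path of length $\ell^3$ and none containing $d_1$. A token walking in a tree with used edges deleted never revisits a vertex, so once blue enters one of these subtrees it cannot leave. Hence, starting from $d_1$, blue traverses at most one of the three long paths. Counting the edges of blue's component and subtracting these $2\ell^3$ unusable ones bounds blue's moves by $(4m+2)\ell^3+O(m\ell^2)$, however it wanders, and this is what loses to red.

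A smaller point: red's confinement to $R_2\to\cdots\to R_3$ does not come from the variable connectors being cut off. It comes from a local check at the position with blue on $B_1$, red on $R_1$, and blue to move. The first player to deviate from $B_1\to B_2\to d_1$ (blue) or $R_1\to R_2\to\cdots$ (red) immediately loses. After red leaves $R_2$ toward $R_3$, it can neither return nor interfere with blue.
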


\begin{proof}
    For each $i$, let us regard passing through $f_{i, 1}, f_{i, 2}, \dots, f_{i, 2m}$ inside the variable gadget $i$ as assigning true to variable $x_i$, and passing through $t_{i, 1}, t_{i, 2}, \dots, t_{i, 2m}$ as assigning false to $x_i$.

    Suppose that $Q_1 x_1 Q_2 x_2 \dots Q_n x_n \phi(x_1, x_2, \dots, x_n)$ is true.
    Then, the blue player can win the game just by moving their token so that $\phi(x_1, x_2, \dots, x_n)$ becomes true, which is always possible no matter how the red token moves, due to $Q_1 x_1 Q_2 x_2 \dots Q_n x_n \phi(x_1, x_2, \dots, x_n)$ being true.
    Observe that, just after the turn on which the red token is placed on $R_1$, the blue token is placed on $B_1$.
    When both players play optimally, the blue token moves as $B_1 \rightarrow B_2 \rightarrow d_1$ and the red token moves as $R_1 \rightarrow R_2 \rightarrow \dots \rightarrow R_3$, since otherwise the first player to violate this would immediately lose the game.
    From $R_2$, the red player can spend $(4m + 3) \ell^3 + O(1)$ turns.
    Meanwhile, the blue player can spend at least $(4m + 4) \ell^3$ turns by moving their token as follows.
    \begin{enumerate}
        \item Move as $d_1 \rightarrow \dots \rightarrow a_1$.
        \item For each $1 \leq j \leq m$ (in increasing order), let $i, k$ be integers such that the $k$-th literal with variable $x_i$ satisfies clause $C_j$ and move as $a_j \rightarrow \dots \rightarrow a_{j,k} \rightarrow \dots \rightarrow (t / f)_{i,2j} \rightarrow (t / f)_{i,2j-1} \rightarrow \dots \rightarrow b_{j,k} \rightarrow \dots \rightarrow b_j$.
        \item Move as $b_m \rightarrow \dots \rightarrow d_2 \rightarrow \dots \rightarrow B_3 \rightarrow \cdots$.
    \end{enumerate}
    Note that this movement is possible by the construction of $G$ and the fact that the red player cannot go back after heading to $R_3$ from $R_2$.
    Hence, the blue player can win the game whenever $Q_1 x_1 Q_2 x_2 \dots Q_n x_n \phi(x_1, x_2, \dots, x_n)$ is true.

    \begin{figure}[htbp]
        \centering
        \includegraphics[page=6]{figure}
        \caption{How the component that the blue token belongs to looks.}
        \label{fig:three-subtrees}
    \end{figure}

    Suppose that $Q_1 x_1 Q_2 x_2 \dots Q_n x_n \phi(x_1, x_2, \dots, x_n)$ is false.
    In this case, the red player can win the game just by moving their token so that $\phi(x_1, x_2, \dots, x_n)$ becomes false.
    This is again always possible no matter how the blue player moves.
    By doing so, when the red token arrives at $R_1$, there is at least one clause $C_j$ such that the variable-gadget-side endpoints of paths of length $\ell^2$ or $\ell^2 + 1$ attached to the clause gadget for $C_j$ are all leaves.
    Hence, from the viewpoint of the blue token placed on $d_1$, the component containing $d_1$ can be viewed as depicted in \zcref{fig:three-subtrees}.
    There are three subtrees rooted at $a_j$, $b_j$, and $B_3$ each containing a path of length $\ell^3$.
    As $d_1$ is not in those subtrees, starting from $d_1$, at least two of those three paths of length $\ell^3$ cannot be traversed by the blue token.
    This results in an upper bound of $(4m+2) \ell^3 + O(m \ell^2) = (4m + 2 + o(1)) \ell^3$ on the number of possible edges that the blue player could use after placing their token on $B_2$.
    This is less than the number of moves available to the red player from $R_2$ for sufficiently large $n, m$, and hence, the red player wins.
\end{proof}

This completes the proof of \zcref{thm:main}.
Lastly, let us briefly confirm that the same reduction yields \zcref{cor:variant}, where two tokens are not allowed to be placed on the same vertex.
In \zcref{lem:unexpected-behavior}, it is clear that there is no chance that the two tokens are placed on the same vertex or on adjacent vertices throughout the winning strategy for the other player $c$.
Hence, we can assume that both players play the game as stated in \zcref{lem:unexpected-behavior}.
After such play, the two tokens become adjacent, occupying vertices $B_1$ and $R_1$.
However, there is no problem, as even with the additional rule the optimal movements are to move the blue token as $B_1 \rightarrow B_2 \rightarrow d_1$ and the red token as $R_1 \rightarrow R_2 \rightarrow \dots \rightarrow R_3$, which is exactly what we expected in the proof of \zcref{lem:correctness}.
Since there is no further possibility that the two tokens become adjacent, \zcref{cor:variant} can be shown by almost the same arguments.

\section{Conclusions}

In this paper, we showed the PSPACE-completeness of \RPEGsc{} and its variant \RDPEGsc{}.
Following these results, we conclude this paper by posing some open problems.
\begin{itemize}
    \item \textsc{Geography} is known to be PSPACE-complete on planar directed graphs with maximum degree 3. We conjecture that the two problems in this paper remains PSPACE-complete on such graphs.
    \item We conjecture that the \emph{\textsc{Unrooted}} variants, where the players can choose the starting vertices before the game, are also PSPACE-complete.
    A variant of Unrooted \PEG{} called \emph{Trail Trap} is known to be NP-hard~\cite{DBLP:journals/dam/BuchananCCHCSW26} to test if the second player wins and the authors conjectured that the problem is PSPACE-complete~\cite[Conjecture 23]{DBLP:journals/dam/BuchananCCHCSW26}.
    We believe that these three \textsc{Unrooted} problems are so similar that showing the PSPACE-completeness of any one of them would give sufficient insight for resolving the others.
\end{itemize}

\section*{AI Usage Statement}

The (original) proofs in this paper were obtained by the author without any use of generative AI. However, the author used GPT-5.6 Luna, GPT-5.6 Sol, and GPT-6 Astra for searching related literature, checking the proofs, and improving the writing.

\section*{Acknowledgements}

I thank Kanae Yoshiwatari for introducing the problem to me and checking an earlier draft.

\bibliography{main}
\bibliographystyle{abbrvurl}

\end{document}